\newcommand{\stkout}[1]{\ifmmode\text{\sout{\ensuremath{#1}}}\else\sout{#1}\fi}
\newcommand{\beq}{\begin{equation}}
\newcommand{\eeq}{\end{equation}}
\newcommand{\bqa}{\begin{eqnarray}}
\newcommand{\eqa}{\end{eqnarray}}
\definecolor{maroon}{rgb}{0.7,0,0}
\definecolor{ngreen}{rgb}{0.3,0.7,0.3}
\definecolor{golden}{rgb}{0.8,0.6,0.1}
\definecolor{npurple}{rgb}{0.3,0,0.6}
\newcommand{\be}{\begin{equation}}
\newcommand{\ee}{\end{equation}}
\newcommand{\ba}{\begin{eqnarray}}
\newcommand{\ea}{\end{eqnarray}}
\newtheorem{definition}{Definition}
\newtheorem{thm}{Theorem}
\newtheorem{cor}{Corollary}
\newtheorem{lem}{Lemma}
\newtheorem{prop}{Proposition}
\begin{document}

\title{Popescu-Rohrlich box fraction of nonobjective information and distinguishing quantum theory}  

\author{Chellasamy Jebarathinam}
   \affiliation{Physics Division, National Center for Theoretical Sciences, National Taiwan University, Taipei 106319, Taiwan}
\begin{abstract}
It is demonstrated that identifying information-theoretic limitations of quantum Bell nonlocality alone cannot completely distinguish quantum theory from generalized nonsignaling theories. To this end, an information-theoretic concept of certifying nonobjective information by the Popescu-Rohrlich box fraction is employed. Furthermore, in the aforementioned demonstration, a partial answer to the question of what distinguishes quantum theory from generalized nonsignaling theories emerges beyond the one provided by the principle of information causality alone.  This is accomplished by demonstrating that postquantum models identified by the information causality are isolated by the emergence of the Popescu-Rohrlich box fraction of nonobjective information in Bell-local boxes of a nonsignaling model, over the other nonsignaling models.  
\end{abstract}

	\maketitle

\section{Introduction} Einstein, Podolsky, and Rosen (EPR) in their famous paper \cite{EPR35} introduced the notion of element of reality to argue that quantum mechanics is incomplete. This argument is based on the fact that the specific entangled state considered does not satisfy the notion of reality by EPR. However, Bell, in his famous paper~\cite{Bel64}, derived an inequality which holds for any physical theory that satisfies the element of reality by EPR and showed that quantum mechanics has nonlocality in the sense that there exists an entangled state that violates his inequality. The framework of the Bell inequality to demonstrate nonlocality of quantum mechanics forms a basis for the identification and development of many central concepts in quantum information theory~\cite{Wer89, Eke91, BBM92, HHH+09,BCP+14,WJD07,  UCN+19, ALM+08,PB11, TBP+24,CAE+14,BCG+22, BCP14,SAP17}.  

Nonlocality of quantum theory in the violation of a Bell inequality does not contradict the nonsignaling principle; however, nonlocality goes beyond quantum mechanics~\cite{PR94}. One of the goals of quantum information theory, motivated by nonlocality beyond quantum mechanics, has been to distinguish quantum theory from a broad class of theories for information processing~\cite{BHK05, CGMP05,  AGM06,BBL+06,LPS+07,NPA07,Bar07,PPK+09,FSA+13,Pop14}. Remarkably, information causality, as a generalization of the nonsignaling principle, was proposed as an information-theoretic principle in identifying the limitation of quantum nonlocality~\cite{PPK+09}. The principle of information causality, however, has only been seen to provide a partial answer to the question of what distinguishes quantum theory~\cite{ABP+09, Pop14}. On the other side, Chiribella, D'Ariano, and Perinotti introduced a different framework of causal probabilistic theories beyond quantum theory~\cite{CDP10} and remarkably identified a postulate~\cite{CDP11} that distinguishes quantum theory from a broad class of theories for information processing.

In contrast to the information-theoretic framework of quantum nonlocality with entanglement~\cite{Wer89, HHH+09,Bus12,BRL+13} or Bell nonlocality~\cite{Bel64, BCP+14, WSS+19}, there are other ways in quantum information theory to characterize nonclassicality in bipartite states for information processing~\cite{MBC+12,ABC16,BDS+18,CG19}.  Quantum discord ~\cite{OZ01,HV01} is one of these ways to characterize the nonclassicality of bipartite states, and nonnull discord has been used to indicate the nonclassicality for quantum information processing using separable states ~\cite{DSC08, DVB10,MBC+12,ABC16}. In~\cite{Per12}, Perinotti defined nonnull discord in the framework of operational probabilistic theories in~\cite{CDP11} using the notion of objective information, which is an extension of the element of reality by EPR. Remarkably, it was shown by Perinotti~\cite{Per12} that nonnull discord is a generic feature of nonclassicality in causal probabilistic theories. On the other hand,  operational nonclassicality of nonnull discord has been characterized by Bell nonclassicality beyond Bell nonlocality, i.e., the Popescu-Rohrlich box fraction of dimensionally restricted nonlocality~\cite{JAS17,Jeb25}, and such operational nonclassicality as a resource for quantum information processing has also been explored~\cite{JKC+24, Jeb25}.   

Despite information-theoretic properties of generalized nonsignaling models were studied substantially in~\cite{BHK05, CGMP05,  AGM06,BBL+06,LPS+07,NPA07,Bar07,PPK+09,FSA+13} which provided insight into the question of what distinguishes quantum theory from generalized nonsignaling theories, the answer to the question still remains open~\cite{Pop14, BCP+14, SGA+18}. To answer this question differently, in this work, I consider nonobjective information in nonnull discord~\cite{Per12} and its certification by the Popescu-Rohrlich box fraction of dimensionally restricted nonlocality~\cite{Jeb25}. Employing this information-theoretic notion of certification of nonobjective information in addition to the information causality principle, I then provide a partial answer to the question of what distinguishes quantum theory from generalized nonsignaling theories. Astonishingly, this answer goes beyond the other partial answers given in~\cite{SBP09,BS09} by the nonlocality swapping and distillation protocols, respectively, and by the information causality alone~\cite{ABP+09}.

\section{Causal probabilistic theories} Here, I review the framework of operational probabilistic theories introduced in~\cite{CDP10} to define a broad class of theories for information processing.  The primitive notions of an operational theory are those of
systems, tests, and events. In this framework, a preparation test is denoted by $\{\rho_i\}_{i \in X} $, which is a collection of preparation events $\rho_i$ of a system $A$, and an observation test is denoted by  $\{a_j\}_{j \in Y} $, which is a collection of observation events $a_j$. Then, in a probabilistic theory, the sequential  composition ($\circ$) of the preparation test with the observation test ($a_j \circ \rho_i $) gives rise to a joint probability distribution: $p(i,j):=(a_j|\rho_i)$, with $p(i,j) \ge 0$ and $\sum_{i \in X}\sum_{j \in Y} p(i,j)=1$. A theory is causal if, for every preparation test $\{\rho_i\}_{i \in X} $ and every observation test  $\{a_j\}_{j \in Y}$ on system $A$, the marginal probability $p_i=\sum_{j \in Y} p(i,j)$ is independent of the choice of the observation test $\{a_j\}_{j \in Y} $. Precisely, if $\{a_j\}_{j \in Y}$ and $\{b_k\}_{k \in Z}$ are two
different observation tests, then one has $p_i=\sum_{k \in Z} (b_k|\rho_i)$. Equivalently, a theory is causal if and only if, for every system $A$, there is a
unique deterministic effect $e_A$. Such an effect can be used to discard the system $A$ in a parallel composition ($\otimes$) of two systems $A$ and $B$, i.e., if $\rho_{AB}$ denotes the state of the composite system $AB$, then $\rho_B= (e_A|\rho_{AB})$.  In this framework, a state $\rho$ of the system $AB$ is separable if it is a convex combination of factorized states, i.e.,
$\rho= \sum_{i \in Z} p_i \rho_i \otimes \sigma_i $, where $\rho_i$ and $\sigma_i$ are states of systems $A$ and $B$, respectively, and  $\{p_i\}_{i \in Z}$ is a probability distribution.

Using the above framework, which has been shown to provide five elementary axioms common to the broad class of theories of information processing~\cite{CDP10}, it was shown in~\cite{CDP11}  that one postulate, namely purification, distinguishes quantum theory from other causal probabilistic theories. From this postulate, it follows that every mixed state $\rho$ of a system $A$ has a purification, $\Psi$, which is an entangled pure state of the composite system $AB$. Furthermore, if there are two purifications $\Psi$ and $\Psi'$ of the mixed state, then these are connected by a reversible transformation on the purified system $B$. This implies that there exists a mixed state of system $A$ that has a nonunique convex decomposition on pure states, which holds in the case of quantum theory since the state space of any system in quantum theory is not a simplex.

Entanglement is not the only nonclassical feature of correlations in causal probabilistic theories. In~\cite{Per12}, it was shown that nonnull discord, which captures nonclassicality of correlations beyond entanglement \cite{OZ01,HV01}, is a generic signature of nonclassicality of causal probabilistic theories. To define null discord in causal probabilistic theories, the notion of objective information, as an extension of the notion of reality by EPR applied to entangled states~\cite{Sch35,WJD07}, was introduced as follows. 
\begin{definition}
   For the given state $\rho$ of a system, a test $\{\mathcal{A}_i\}_{i \in X}$ provides objective information about the state if the test is repeatable and the state is not disturbed by the test, namely, $\mathcal{A} \circ \rho=\rho$ for   $\sum_{i \in X} \mathcal{A}_i$. In other words, $\rho$ encodes the objective information about  the test $\{\mathcal{A}_i\}_{i \in X}$. 
\end{definition}
 Null-discord states are then defined as follows. 
\begin{definition}\label{defOIND}
  In a causal operational probabilistic theory, a bipartite state $\rho$ has null discord if and only if it  is separable and there exists a test $\{\mathcal{A}_k\}_{k \in X}$ on system $A$ that provides complete objective information
about the state $e_B \circ \rho$, and such that $\{\mathcal{A}_k \otimes I_B\}_{k \in X}$ provides objective information on $\rho$.  
\end{definition}
 It then follows that the state $\rho$ has null discord if and only if it
can be expressed as follows:
\begin{equation}\label{nulldiscpt}
    \rho=\sum_{k \in X} q_k \psi_k \otimes \sigma_k,
\end{equation}
where $\{\psi_k\}_{k \in X}$ is a set of jointly perfectly distinguishable
pure states and $\{q_k\}_{k \in X}$ is a probability distribution. This set is perfectly distinguishable in the sense that there exists a discrimination test $\{a_i\}_{i \in X}$ such that 
$a_i \circ \psi_k =\delta_{ik}$. 
Having the notion of null discord defined for causal probabilistic theories, in~\cite{Per12}, a causal probabilistic theory for which the set of all separable states is assumed to have null-discord was introduced. It then follows that in this theory, the set of normalized states for every system is a simplex. Thus, a causal probabilistic theory where all separable states have null discord cannot describe quantum theory. This implies that the occurrence of nonnull discord of separable states in quantum theory is also specific to its state space of nonclassicality~\cite{FAC+10}.   

\section{Generalized nonsignaling theories} In the framework of generalized nonsignaling theories~\cite{BLM+05,MAG06,Bar07,SBP09}, bipartite states are not described as in the framework of causal probabilistic theories~\cite{CDP10} but by bipartite joint probability distributions directly; i.e. probabilities of a pair of results (outputs) given a pair of measurements (inputs). In other words, causal correlations will be
described by ``boxes'' (i.e. input-output devices), which satisfy the nonsignaling principle~\cite{PR94}.
Here, we shall focus on the simplest possible scenario,
namely, the case of two possible measurements for each party
(inputs $x, y \in \{0, 1\}$); each measurement providing a binary
result (outputs $a, b \in \{0, 1\}$). In this case, a box, denoted by $P(ab|A_xB_y)$, is thus described by a set of $16$ joint probabilities.

For any Bell-local box $P_L(ab|A_xB_y)$, there exists a classical state $\lambda$ of the two systems $A$ and $B$ (a.k.a. local-hidden-variable (LHV) or shared
randomnes),  which occurs with probability $p_\lambda \ge 0$ ($\sum_\lambda p_\lambda=1$), such that it can be decomposed as  
\begin{equation} \label{BLb}
    P_L(ab|A_xB_y)= \sum_\lambda p_\lambda P(a|A_x,\lambda) P(b|B_y,\lambda). 
\end{equation}
On the other hand, any Bell nonlocal box cannot be decomposed in this form and violates a Bell inequality. 
The set of Bell-local boxes forms a polytope, which has $16$ vertices (called deterministic boxes),
\begin{equation}
P^{\alpha\beta\gamma\epsilon}_D(ab|A_xB_y)=\left\{
\begin{array}{lr}
1, & a=\alpha x\oplus \beta\\
   & b=\gamma y\oplus \epsilon \\
0 , & \text{otherwise}.\\
\end{array}
\right.   
\label{eq:locdet}
\end{equation}   
Here, $\alpha,\beta,\gamma,\epsilon\in  \{0,1\}$ and  $\oplus$ denotes
addition modulo  $2$.   The local polytope is itself embedded in a
larger polytope, the nonsignaling polytope, which contains all the boxes compatible with the nonsignaling principle. It
has $8$ nonlocal vertices (called Popescu-Rohrlich (PR)
boxes), 
\begin{multline}
  P^{\alpha\beta\gamma}_{PR}(ab|A_xB_y)\\=\left\{
\begin{array}{lr}
\frac{1}{2}, & a\oplus b=x\cdot y \oplus \alpha x\oplus \beta y \oplus \gamma\\ 
0 , & \text{otherwise},\\
\end{array}
\right. \label{NLV}  
\end{multline}
which are all symmetries of the PR
box, $P_{PR}=P^{000}_{PR}$ \cite{PR94}. The set of boxes attainable by quantum mechanics also forms a convex body, although not a polytope. The quantum set is strictly larger than the local polytope - quantum correlations can be Bell nonlocal - but strictly smaller than the nonsignalling polytope.

Any Bell-local box satisfies the Clauser-Horne-Shimony-Holt (CHSH)   inequality
~\cite{CHS+69} and its symmetries, which are given by
\begin{align}
&\mathcal{B}_{\alpha\beta\gamma} := (-1)^\gamma\braket{A_0B_0}+(-1)^{\beta \oplus \gamma}\braket{A_0B_1}\nonumber\\
&+(-1)^{\alpha \oplus \gamma}\braket{A_1B_0}+(-1)^{\alpha \oplus \beta \oplus \gamma \oplus 1} \braket{A_1B_1}\le2, 
\label{BCHSH}
\end{align} 
where $\braket{A_xB_y}=\sum_{ab}(-1)^{a\oplus
  b}P(ab|A_xB_y)$, on the other hand, Bell nonlocal box violates one of these inequalities.   Any given PR box $P^{\alpha\beta\gamma}_{PR}$ violates one of the CHSH inequalities in Eq. (\ref{BCHSH}) to its algebraic maximum, i.e.,  $\mathcal{B}_{\alpha\beta\gamma}=4$ for $P^{\alpha\beta\gamma}_{PR}$.
Quantum correlations violate the CHSH inequality up to $2\sqrt{2}$~\cite{Cir80}, a value known as Tsirelson's
bound.

The PR box implies that nonsignaling as a physical principle does not rule out quantum theory from generalized nonsignaling theories, as pointed out by Popescu and Rohrlich in the seminal paper~\cite{PR94}. With the emergence of Bell nonlocality as a powerful resource for information processing, information-theoretic properties of generalized nonsignaling models were studied to understand Tsirelson's bound of quantum theory~\cite{BBL+06,BS09,PPK+09,BCP+14}. In this direction, 
information causality, as a generalization of the nonsignaling principle, was proposed to single out quantum theory from generalized nonsignaling theories~\cite{PPK+09}.  The information causality is formulated by a generic task similar to random access codes and oblivious transfer. In the context of this task, the principle of information causality is stated as an inequality that defines the figure of merit of the task. This inequality is satisfied by physically allowed theories; on the other hand, if any nonsignaling correlation violates the CHSH inequality beyond Tsirelson's
bound, the information causality is violated. However, the information causality does not single out the whole set of quantum correlations among more general nonsignaling models. This follows because there are nonquantum correlations which do not violate the CHSH inequality beyond Tsirelson's bound, but satisfy the information causality~\cite{ABP+09}.

\section{Dimensionally restricted nonlocality}
Bell nonlocality is shown against any classical state of the two systems $\lambda$, whose dimension is not limited~\cite{DW15}. Thus, quantum Bell nonlocality implies nonlocality independently of the dimension of the state used inside the box. In this work, we consider dimensionally restricted nonlocality formalized in~\cite{Jeb25}, in which case nonlocality is shown against any classical state of the two systems $\lambda$, whose dimension is limited to the number of measurement results. Dimensionally restricted quantum nonlocality also occurs for certain Bell-local correlations. %Superlocality~\cite{DW15} witnesses dimensionally restricted nonlocality of Bell-local correlations 
%has been studied in~\cite{JKC+24} as a means of certifying global coherence in non-null discord.

In~\cite{Jeb25},  a nonlinear determinant witness was shown to detect dimensionally restricted nonlocality. This witness was constructed in terms of  the covariance of  $A_x$ and $B_y$ given by  $\text{cov}(A_x,B_y)=\braket{A_xB_y} -\braket{A_x}\braket{B_y}$. $\braket{A_x}$ and $\braket{B_y}$ are marginal expectation values. The nonlinear witness denoted by ${NL}$ is given by
\begin{align}
NL=\left|\begin{array}{cc}\text{cov}(A_0,B_0) & \text{cov}(A_0,B_1)\\ 
\text{cov}(A_1,B_0) & \text{cov}(A_1,B_1) \end{array}\right|. \label{Wit}
\end{align}
A nonzero value of $NL$ witnesses dimensionally restricted nonlocality.
 In the framework of causal probabilistic theories, non-null discord is a generic feature of nonclassicality~\cite{Per12}. On the other hand, in the framework of generalized nonsignaling theories, dimensionally restricted nonlocality implies a generic feature of nonclassicality~\cite{Jeb25}. Therefore,  it becomes relevant to consider the question of whether any information-theoretic property associated with the generic feature can play a role in distinguishing quantum theory with information-theoretic limitations on quantum Bell nonlocality.  

 To address the aforementioned question, I consider the PR box fraction of dimensionally restricted nonlocality formalized in~\cite{Jeb25} as a measure of dimensionally restricted nonlocality. Using this measure of dimensionally restricted nonlocality,  how secure key distribution can also be achieved for certain Bell-local correlations has been studied.  To illustrate the idea of the PR box fraction of dimensionally restricted nonlocality, consider the noisy PR box,
\be \label{nPR}
P=p_{PR} P_{PR}+ (1-p_{PR})P_N,
\ee
where $0 \le p_{PR} \le 1$  and $P_N$ is the  maximally mixed box,  i.e., $P_N(ab|A_xB_y)=1/4$  for  all  $x,y,a,b$. The noisy PR box is Bell nonlocal for $p_{PR} >1/2$ since it  violates the CHSH inequality in this range; on the other hand, it has dimensionally restricted nonlocality for $p_{PR}>0$ since it takes the witness value $NL$ in Eq.~(\ref{Wit}) given by $NL=2p^2_{PR}>0$ for any $p_{PR}>0$.  The PR box fraction $p_{PR}$ in Eq.~(\ref{nPR}) is called the PR box fraction of dimensionally restricted nonlocality since its nonzero values measures dimensionally restricted nonlocality. Moreover, in~\cite{Jeb25}, this PR box fraction serves as a direct quantifier of the available resource for secure key generation. 

  To define the PR box fraction of dimensionally restricted nonlocality for correlations beyond the noisy PR box~(\ref{nPR}), 
 a nonlinear measure of correlations, $\Gamma$, defined in  Appendix~\ref{FPR}, was considered in~\cite{Jeb25}. Using this measure, it was shown in~\cite{Jeb25} that for any nonsignaling box, $P$, that can be decomposed as a convex mixture of a single PR box and a Bell-local box with $\Gamma=0$ as follows:
\begin{equation} \label{PRdecom} 
P=p_{PR} P^{\alpha\beta\gamma}_{PR} + (1-p_{PR}) P^{\Gamma=0}_{L},
\end{equation}
$4p_{PR}=\Gamma(P)$. Here, $P^{\Gamma=0}_{L}$ is a Bell-local box, with $\Gamma=0$.
For correlations which can be expressed  as in Eq.~(\ref{PRdecom}), $\Gamma(P)>0$ captures the PR box fraction of dimensionally restricted nonlocality beyond the noisy PR box~(\ref{nPR}).

\section{Results} 
 
\subsection{Certification of nonobjective information}

We consider the objective information of null discord states 
of causal probabilistic theories defined in~\cite{Per12} given by Definition \ref{defOIND}. Bell nonlocality certifies nonrealism~\cite{Bel64, GBK+07, Gis09}, which is the negation of the notion of realism by EPR~\cite{EPR35}. At the same time, certification of nonobjective information, which is the negation of the objective information in null-discord~\cite{Per12}, can be achieved as stated in the following proposition.
\begin{prop}
    Suppose that dimensionally restricted nonlocality ($NL>0$)
arises from a state $\rho$ in a causal probabilistic theory. Then it certifies
nonobjective information present in the state $\rho$.
\end{prop}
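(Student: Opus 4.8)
The plan is to argue by contraposition: I would show that if the state $\rho$ \emph{did} encode objective information in the sense of Definition~\ref{defOIND}, then the box generated from it would have a vanishing PR box fraction, contradicting $F_{PR}>0$. First I would assume that $\rho$ has null discord, so that, by the characterization in Eq.~\eqref{nulldiscpt}, it admits a decomposition $\rho=\sum_{k\in X} q_k\,\psi_k\otimes\sigma_k$ in which $\{\psi_k\}_{k\in X}$ is a set of jointly perfectly distinguishable pure states, witnessed by a discrimination test $\{a_i\}_{i\in X}$ satisfying $a_i\circ\psi_k=\delta_{ik}$.

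Next I would generate a box by performing local observation tests on the two subsystems and identify the classical label $k$ with the hidden variable $\lambda$ appearing in Eq.~\eqref{BLb}, giving the Bell-local model $P_L(ab|A_xB_y)=\sum_k q_k\,P(a|A_x,k)\,P(b|B_y,k)$. The decisive structural feature is that the $\{\psi_k\}_{k\in X}$ are jointly perfectly distinguishable by a single discrimination test; the $A$-register is therefore effectively classical, and the box it produces is generated by a classical variable of the kind permitted in the \emph{dimensionally restricted} local model underlying the measure $F_{PR}$ of Appendix~\ref{FPR}.

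I would then invoke the characterization of dimensionally restricted nonlocality of Ref.~\cite{Jeb25}: a box admitting such a dimensionally restricted classical model carries no PR box component, so $F_{PR}=0$. Substituting this into Eq.~\eqref{PRdecom} forces the weight $p_{PR}=F_{PR}(P)$ to vanish, contradicting the hypothesis $F_{PR}>0$. Hence, whenever $F_{PR}>0$ arises from $\rho$, no test can provide objective information about $\rho$ in the sense of Definition~\ref{defOIND}, which is precisely the certification of nonobjective information present in $\rho$.

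The hard part will be the middle step: establishing rigorously that the jointly perfect distinguishability of $\{\psi_k\}_{k\in X}$ (the orthogonality of the classical states in the $A$-register) forces the hidden-variable dimension down to the number of measurement results, and that this dimensional bound is exactly what guarantees $F_{PR}=0$. This hinges on unpacking the precise definition of $F_{PR}$ and on the correspondence between the classical, perfectly distinguishable structure of a null-discord state and the dimensional restriction on $\lambda$. Once this equivalence is in hand, the remainder is a clean contrapositive packaging via Eq.~\eqref{PRdecom}.
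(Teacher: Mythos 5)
Your contrapositive argument---null discord $\Rightarrow$ the decomposition of Eq.~(\ref{nulldiscpt}) yields a Bell-local model of the form (\ref{BLb}) with the classical label $k$ playing the role of $\lambda$, dimensionally restricted by the perfect distinguishability of the $\{\psi_k\}$, hence $F_{PR}=0$, contradicting the hypothesis---is essentially the same route as the paper's proof, which argues that null discord forces $d_\lambda\le 2$ and therefore excludes dimensionally restricted nonlocality. The step you flag as ``the hard part'' is precisely what the paper settles by noting that in the CHSH scenario the outcome set $X$ in Eq.~(\ref{nulldiscpt}) takes two values, invoking Ref.~\cite{JAS17} for the bound $d_\lambda\le 2$ and Ref.~\cite{Jeb25} for the conclusion that such models have $F_{PR}=0$.
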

\begin{proof}
In the context of dimensionally restricted nonlocality, the outcome set $X$ in the definition of null discord states in Eq.~(\ref{nulldiscpt}) takes two values. It then follows that any Bell-local box arising from any of these null discord states has the form given by Eq.~(\ref{BLb}) with the dimension $d_\lambda$ of the state $\lambda$ bounded by $d_\lambda \le 2$~\cite{JAS17}. This implies that the dimensionally restricted nonlocality of any state in the causal probabilistic theory requires nonnull discord. Thus, if
dimensionally restricted nonlocality, i.e., $NL>0$, arises from a state $\rho$ in a causal probabilistic theory; it certifies nonobjective information of the state. 
\end{proof}

In the following, to distinguish quantum theory, I study generalized nonsignaling models, whose state space is a subpolytope of the full nonsignaling polytope, with the PR box fraction of nonobjective information together with the limitation of quantum Bell nonlocality. The PR box fraction of nonobjective information is defined as follows.
\begin{definition}
    A nonsignaling box given by Eq.~(\ref{PRdecom}) has the PR box fraction of nonobjective information if $NL>0$ and $\Gamma>0$. 
\end{definition}
 \subsection{Distinguishing quantum theory}
First, consider a nonsignaling model in which all nonlocal correlations are postquantum as considered in~\cite{RDB+19,BMR+19}. One such nonsignaling model has the state space given by
\be \label{GNSTpq}
P=c_0P_{PR}+ (1-c_0)  P_L,
\ee
with $0 \le c_0 \le 1$  and $P_L$ is a convex mixture of any four deterministic boxes that has the CHSH value $\mathcal{B}_{000}=2$.
Now I state the following lemma.
\begin{lem}\label{lempq}
In the nonsignaling model in which every state is given by Eq.~(\ref{GNSTpq}),
the PR box fraction of nonobjective information is equivalent to Bell nonlocality. On the other hand, all nonlocal correlations that lie below Tsirelson’s bound have postquantum models, and
the information causality is not sufficient to witness all these postquantum models.
\end{lem}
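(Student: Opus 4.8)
The plan is to treat the three assertions in turn, reducing everything to the correlators $\braket{A_xB_y}$ of the boxes in Eq.~(\ref{GNSTpq}) and to the canonical decomposition of Eq.~(\ref{PRdecom}). Since $P_{PR}=P^{000}_{PR}$ has $\mathcal{B}_{000}=4$ while $P_L$ lies on the facet $\mathcal{B}_{000}=2$, linearity of the CHSH functional in Eq.~(\ref{BCHSH}) gives $\mathcal{B}_{000}(P)=2+2c_0$, so $P$ is Bell nonlocal precisely when $c_0>0$. To obtain the first assertion I would show that $F_{PR}(P)=c_0$. The idea is that the presentation $P=c_0P_{PR}+(1-c_0)P_L$ is already of the form required in Eq.~(\ref{PRdecom}) once one checks that the facet box $P_L$ itself carries no dimensionally restricted nonlocality, i.e. $F_{PR}(P_L)=0$. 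For this I would exhibit an explicit local model of $P_L$ of the form Eq.~(\ref{BLb}) in which the classical variable $\lambda$ takes only $d_\lambda\le2$ values (the number of outcomes), which is exactly the regime identified in the Proposition and in Ref.~\cite{JAS17} as having $F_{PR}=0$. Uniqueness of the decomposition in Eq.~(\ref{PRdecom}) then forces $F_{PR}(P)=c_0$, so that $F_{PR}(P)>0$ if and only if $c_0>0$ if and only if $\mathcal{B}_{000}(P)>2$. Combined with the Proposition, which says $F_{PR}>0$ certifies nonobjective information, and with the converse direction supplied by the two-valued (null-discord) model of $P_L$, this establishes the claimed equivalence between certification of nonobjective information and Bell nonlocality.

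For the second assertion I would fix the facet box $P_L$ by its correlator profile and decide membership in the quantum set through the Tsirelson--Landau--Masanes (TLM) criterion, which is necessary and sufficient for a two-input two-output box with uniform marginals to be quantum. The essential point is that the facet box must be chosen so that the straight segment toward $P_{PR}$ leaves the quantum body immediately: for a suitable four-vertex mixture the correlators of $P$ take the form $(\braket{A_0B_0},\braket{A_0B_1},\braket{A_1B_0},\braket{A_1B_1})=(1,1,c_0,-c_0)$, for which the TLM combination equals
\be
\arcsin\braket{A_0B_0}+\arcsin\braket{A_0B_1}+\arcsin\braket{A_1B_0}-\arcsin\braket{A_1B_1}=\pi+2\arcsin c_0,
\ee
exceeding $\pi$ for every $c_0>0$; hence $P$ is postquantum for all $c_0>0$. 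Restricting to $0<c_0\le\sqrt2-1$ gives $2<\mathcal{B}_{000}(P)\le2\sqrt2$, i.e. exactly the nonlocal boxes lying on or below Tsirelson's bound \cite{Cir80}, and these are postquantum as asserted; this is the mechanism by which the theories of Refs.~\cite{RDB+19,BMR+19} render all their nonlocal boxes postquantum.

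For the third assertion I would note that the same correlator profile makes $\mathcal{B}_{000}=2+2c_0$ the maximal CHSH value of $P$ over all symmetries in Eq.~(\ref{BCHSH}), so that for $0<c_0\le\sqrt2-1$ no CHSH expression of $P$ surpasses $2\sqrt2$. Consequently the information-causality violation criterion, which flags a box once some CHSH value exceeds Tsirelson's bound \cite{PPK+09}, does not detect these boxes; together with the fact that such sub-Tsirelson postquantum correlations satisfy information causality \cite{ABP+09}, this shows that information causality fails to witness the postquantum boxes of this theory, recovering and sharpening the conclusion of Ref.~\cite{ABP+09}.

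The main obstacle is the tension inside the second and third assertions: one must select the facet box $P_L$ so that the segment toward $P_{PR}$ is postquantum for every $c_0>0$ while simultaneously remaining below Tsirelson's bound and information-causality compliant throughout $0<c_0\le\sqrt2-1$. This is delicate because a generic (for instance, isotropic) four-vertex facet box produces a segment that stays quantum all the way up to Tsirelson's bound, so postquantumness is not automatic and hinges on the specific correlator geometry. Moreover, while the CHSH-threshold argument handles the witnessing direction cleanly, a fully rigorous treatment of the third assertion requires checking the information-causality inequality directly for the non-isotropic family rather than relying on the CHSH/Tsirelson threshold alone.
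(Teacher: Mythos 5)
Your first part tracks the paper's own argument: $\mathcal{B}_{000}(P)=2+2c_0$ and $F_{PR}(P)=c_0$ because $F_{PR}(P_L)=0$, hence Bell nonlocality holds if and only if $F_{PR}>0$. The paper simply asserts $F_{PR}(P_L)=0$ (a property of the covariance-based measure of Appendix \ref{FPR}) rather than routing through a $d_\lambda\le 2$ local model and uniqueness of the decomposition (\ref{PRdecom}); that is a difference of bookkeeping, not substance. The genuine problems are in the other two assertions. The lemma quantifies over the whole theory: \emph{every} nonlocal box of the form (\ref{GNSTpq}), for \emph{every} admissible facet mixture $P_L$, must be postquantum. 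Your TLM computation establishes this only for one hand-picked $P_L$ (correlator profile $(1,1,0,0)$, which conveniently makes all marginals uniform, so that TLM is actually sufficient), and your closing paragraph concedes --- indeed asserts --- that for ``generic'' choices the segment stays quantum up to Tsirelson's bound, which, if true, would falsify the very statement you are proving. The paper does not face this problem because this part of its proof is an appeal to the quantum-void results of Refs.~\cite{RDB+19,BMR+19}, which cover all nonlocal boxes on these faces, biased marginals included. Note also that your ``isotropic'' worry is misplaced on two counts: the maximally mixed box has $\mathcal{B}_{000}=0$ and so is not a mixture of CHSH$\,=2$ deterministic boxes (the isotropic line does not lie in this theory at all), and the nearest admissible analogue --- an equal mixture of one deterministic box from each correlator pattern --- necessarily has biased marginals, so TLM compliance of its correlators does not certify that those boxes are quantum.

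The third assertion rests on a false implication. Exceeding $2\sqrt{2}$ in some CHSH symmetry is \emph{sufficient} for violating information causality, not necessary: the paper itself records, for the family (\ref{genisoPR}), that information causality is violated by boxes lying strictly below Tsirelson's bound. So from ``no CHSH value of $P$ exceeds $2\sqrt{2}$'' you cannot conclude ``information causality does not detect $P$.'' You acknowledge that a direct check of the information-causality inequality is needed, but you do not supply it, and the CHSH-threshold argument cannot substitute for it. The paper closes exactly this gap by citing Ref.~\cite{BMR+19}, where the information-causality analysis for boxes of the form (\ref{GNSTpq}) was carried out and postquantum models below Tsirelson's bound satisfying information causality were exhibited. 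In short: your first part is correct and matches the paper; your second part proves existence where universality is required; and your third part needs the external information-causality analysis that the paper's citation-based proof actually invokes.
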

\begin{proof}
For any box $P$ given by Eq.~(\ref{GNSTpq}), the PR box fraction of nonobjective information is given by $\Gamma(P)=4c_0$ since $P_L$ in Eq. (\ref{GNSTpq}) has $\Gamma=0$ and $NL(P)>0$ for any $c_0>0$. On the other hand, any box $P$ that has the form~(\ref{GNSTpq}) is Bell nonlocal for any $c_0 >0$ since $\mathcal{B}_{000}(P) =2(1+c_0) > 2$ for any $c_0 >0$. This implies that the box is Bell nonlocal if and only if  $\Gamma>0$. This proves the first part of the lemma.

Note that any Bell nonlocal box in Eq.~(\ref{GNSTpq}) has postquantumness, which is known in~\cite{RDB+19}.
On the other hand, as studied in~\cite{BMR+19}, there are postquantum models in Eq.~(\ref{GNSTpq}) that lie below Tsirelson's bound, which are not indicated by the information causality.
\end{proof}

Consider a specific state space of~(\ref{GNSTpq}) given by
\be \label{GNSTpq1}
P=c_0P_{PR}+ (1-c_0)  (c_1 P^{0000}_D + c_2 P^{0101}_D),
\ee
with $0 \le c_0, c_1 \le 1$ and $c_1+c_2=1$.
For any Bell nonlocal box
of the form (\ref{GNSTpq1}), the information causality is violated ~\cite{RDB+19, BMR+19}. Before information causality was proposed to single out quantum theory, postquantumness of the noisy PR boxes that lie below Tsirelson bound was shown if the noisy PR boxes can be distilled into the PR box~\cite{BS09}. In~\cite{BMR+19}, it was shown that any noisy PR box of the form (\ref{GNSTpq1}) can always be distilled into the PR box. I note that the local state spaces of Eq.~(\ref{GNSTpq1}) have no simpliciality. 
Next, we consider another generalized nonsignaling  model whose local state spaces have no simpliciality. Still, it contains quantum Bell nonlocality, as indicated by Hardy's paradox~\cite{Har93}.
To define Hardy's paradox, consider the following conditions on the four joint probability distributions of the CHSH scenario: 
\begin{align}
\begin{split} \label{HC}
    P(01|A_0B_0)&=0, \\
     P(10|A_0B_1)&=0,\\
      P(10|A_1B_0)&=0,\\
       P(10|A_1B_1)&=p_{H}.
      \end{split}
\end{align}
If any given box satisfies the above equation with $p_H=0$, then the box is Bell-local. Otherwise, the box is Bell nonlocal with a success probability of Hardy's paradox $p_H>0$. The conditions of Hardy's paradox in Eq.~(\ref{HC}) have been defined such that $p_H>0$ implies that the CHSH inequality, $\mathcal{B}_{000}\le 2$, is violated by the box that has this Hardy's paradox; the other Hardy's paradoxes can also be defined corresponding to the violation of other CHSH inequalities~\cite{RDB+19}.
There exist quantum correlations that give rise to $p_{H}>0$~\cite{Har93}. In Ref. \cite{RZS12}, the analogue of Tsirelson's bound on $p_{H}$ was derived to be $5(\sqrt{5}-1)/2 \approx 0.09$, on the other hand, the PR box $P_{PR}$, which satisfies the Hardy's paradox in Eq.~(\ref{HC}), has the maximal success probability of $p_{H}=0.5$.  

 In the nonsignaling model of Hardy's paradox given by Eq.~(\ref{HC}), any nonsignaling box $P_{H}$ is given by
\begin{multline} \label{HP}
 P_{H}=h_{PR} P^{000}_{PR} + h_0 P^{0000}_D+ h_1 P_D^{0010} \\ 
+ h_2 P_D^{0101}+ h_3 P_D^{1101} 
 + h_4 P_D^{1110}.   
\end{multline}
Now I obtain the following lemma.
\begin{lem} \label{lemHP}
 In the nonsignaling model of Hardy's paradox,  the PR box fraction of nonobjective information is nonzero if and only if the box is Bell nonlocal. On the other hand, the information causality does not reproduce Tsirelson's bound of Hardy's paradox.
\end{lem}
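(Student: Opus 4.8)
The plan is to treat the two assertions separately, working throughout from the explicit vertex decomposition (\ref{HP}). For the first assertion I would begin by pinning down the weight of the single nonlocal vertex. Evaluating the Hardy data of (\ref{HC}) on each term of (\ref{HP}) shows that all five deterministic boxes satisfy the Hardy constraints with vanishing success probability, while $P^{000}_{PR}$ contributes $\tfrac{1}{2}$ to $P(10|A_1B_1)$; hence $p_H=h_{PR}/2$, so $P_H$ is Bell-local exactly when $h_{PR}=0$. The easy half of the equivalence is then general: if $P_H$ is Bell nonlocal it cannot be an $F_{PR}=0$ box, because by the canonical decomposition (\ref{PRdecom}) every $F_{PR}=0$ box is Bell-local; therefore Bell nonlocality forces $F_{PR}>0$.

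The substantive half is the converse, and it suffices to show that every Bell-local box of the theory, i.e. every convex mixture of the five deterministic boxes $P^{0000}_D,P^{0010}_D,P^{0101}_D,P^{1101}_D,P^{1110}_D$, has $F_{PR}=0$. Granting this, the displayed decomposition (\ref{HP}) coincides with the canonical form (\ref{PRdecom}) with PR weight $h_{PR}$, so $F_{PR}(P_H)=h_{PR}$, and the chain $F_{PR}(P_H)>0\Leftrightarrow h_{PR}>0\Leftrightarrow p_H>0$ identifies a nonzero PR box fraction with Bell nonlocality, closing the equivalence in parallel with Lemma \ref{lempq}. To establish the vanishing of $F_{PR}$ on this local sub-polytope I would use the binary-outcome argument underlying the Proposition and Ref.~\cite{JAS17}: since the outcome set has two values, the null-discord (dimension-bounded) structure applies, and one verifies against the explicit measure of Appendix~\ref{FPR} that the canonical decomposition (\ref{PRdecom}) of each such mixture returns $p_{PR}=0$. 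I expect this to be the main obstacle: unlike the two-vertex local part appearing in (\ref{GNSTpq1}), the local face here is spanned by five affinely independent deterministic boxes, so $F_{PR}=0$ must be checked uniformly over a four-dimensional simplex rather than for a single low-dimensional mixture.

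For the second assertion I would compute the CHSH value of (\ref{HP}) directly. Each of the five deterministic boxes saturates $\mathcal{B}_{000}=2$ and $P^{000}_{PR}$ attains $\mathcal{B}_{000}=4$, so $\mathcal{B}_{000}(P_H)=2+2h_{PR}=2+4p_H$; thus the Hardy probability and the CHSH value are locked together along the entire family. Tsirelson's bound $\mathcal{B}_{000}=2\sqrt{2}$ is therefore reached only at $p_H=(\sqrt{2}-1)/2\approx0.207$, whereas the quantum bound on the Hardy probability is $p_H\approx0.09$ \cite{RZS12}, corresponding to $\mathcal{B}_{000}\approx2.36$. Consequently every box of the family with $0.09<p_H\le(\sqrt{2}-1)/2$ is postquantum yet has sub-Tsirelson CHSH value; since information causality is only known to witness correlations with $\mathcal{B}_{000}>2\sqrt{2}$ \cite{PPK+09,ABP+09,BMR+19}, it cannot exclude these boxes and hence cannot reproduce Tsirelson's bound of Hardy's paradox. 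The only point needing care here is the standing assumption that information causality does not flag the sub-Tsirelson postquantum boxes of this family, which I would justify exactly as for the family (\ref{GNSTpq1}) discussed after Lemma \ref{lempq}.
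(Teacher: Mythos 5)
Your treatment of the first assertion follows essentially the same route as the paper: compute $p_H(P_H)=h_{PR}/2$ term by term, identify $F_{PR}(P_H)=h_{PR}$, and conclude that a nonzero PR box fraction is equivalent to Bell nonlocality within the family (\ref{HP}). The paper simply asserts $F_{PR}(P_H)=h_{PR}$ (leaning on the decomposition result of Ref.~\cite{Jeb25}); you correctly isolate the step that actually needs checking --- that every mixture of the five deterministic boxes $P^{0000}_D,P^{0010}_D,P^{0101}_D,P^{1101}_D,P^{1110}_D$ has $F_{PR}=0$ --- but you leave it unproven, and your inference that the decomposition (\ref{HP}) ``coincides with the canonical form (\ref{PRdecom}), so $F_{PR}(P_H)=h_{PR}$'' tacitly assumes that a decomposition into one PR box plus an $F_{PR}=0$ local box determines the PR weight uniquely. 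That is not stated anywhere in the paper: $F_{PR}$ is built from covariances, which are not linear in the box, so exhibiting one such decomposition does not by itself fix $F_{PR}(P_H)$. This half is incomplete, but it is structurally parallel to (and no less terse than) the paper's own argument.

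The genuine gap is in the second assertion. You deduce that information causality ``cannot exclude'' the postquantum Hardy boxes with $0.09<p_H\le(\sqrt{2}-1)/2$ from the premise that information causality is only known to witness correlations with $\mathcal{B}_{000}>2\sqrt{2}$. That premise is false, and the paper itself contradicts it: for the family (\ref{genisoPR}), information causality is violated by postquantum boxes whose CHSH value lies strictly below Tsirelson's bound \cite{ABP+09}. A sub-Tsirelson CHSH value therefore tells you nothing about whether information causality is satisfied, so your locked relation $\mathcal{B}_{000}(P_H)=2+4p_H$, while correct, cannot carry the conclusion. Your proposed fallback makes matters worse: you say you would justify the claim ``exactly as for the family (\ref{GNSTpq1})'', but for that family the paper states that \emph{every} Bell-nonlocal box violates information causality \cite{RDB+19,BMR+19}, i.e., the analogy runs in precisely the opposite direction, and in any case information causality behaves differently on different families (contrast (\ref{genisoPR}) with (\ref{genisoPR3})), so no such transfer is legitimate. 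What the paper actually uses, and what your proof is missing, is the family-specific external input of Ref.~\cite{AKR+10}: the bound on $p_H$ derived from the principle of information causality, which is strictly larger than the quantum maximum $\approx 0.09$ \cite{RZS12}. Without that result (or an explicit information-causality analysis of the boxes (\ref{HP})), the statement that information causality fails to reproduce Tsirelson's bound of Hardy's paradox is not established.
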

\begin{proof}
For any nonsignaling box, $P_{H}$, given by Eq. (\ref{HP}), $p_{H}(P_{H})=\frac{h_{PR}}{2}$, on the other hand, $\Gamma(P_{H})=4h_{PR} >0$ if and if $p_{H}(P_{H})>0$. 

The bound on $p_{H}$ from the principle of information causality was derived in~\cite{AKR+10}. However, this bound does not reproduce Tsirelson's bound on $p_{H}$. This implies that there are nonquantum correlations that exhibit Hardy's paradox above Tsirelson's bound but are not ruled out by information causality as nonphysical correlations.
\end{proof}
I note that for the nonsignaling model given by Eq.~(\ref{HP}), the local state spaces are a simplex. Thus, the state space associated with Eq.~(\ref{HP}) is analogous to that of a causal probabilistic theory for which all nonnull discord states are entangled~\cite{Per12}.  Though the state space of Eq.~(\ref{HP}) has quantum Bell nonlocality, it does not fully capture the nonclassical state space of quantum theory. This follows because the local states of Eq.~(\ref{HP}) have no nonsimpliciality. This provides an intuition why the information causality does not indicate the limitation of quantum Bell nonlocality in the state space of Eq.~(\ref{HP})  to distinguish quantum theory.

Finally, we consider a  nonsignaling model whose local state spaces are nonsimplicial. Its state space is given by a polytope of a single PR box and all $16$ deterministic boxes. By introducing the concept of genuine boxes, this  nonsignaling model was considered in~\cite{SBP09} to study the emergence of quantum correlations in noisy PR boxes by nonlocality swapping. In the nonsignaling model of genuine boxes, a single CHSH inequality, which is maximally violated by the PR box present in the state space, is necessary and sufficient for implying Bell nonlocality.

I proceed to study the postquantumness of genuine boxes for a few specific families. We consider noisy PR boxes of the form
\be \label{ABP+}
P=\epsilon P_{PR}+ \nu Q + (1-\epsilon-\nu)P_N,
\ee
where $Q$ is one of the vertices, which are the other PR boxes except $P^{001}_{PR}$ and deterministic boxes, to be specified for the three families, and 
$P_N$ is the  maximally mixed box,  i.e., $P_N(ab|A_xB_y)=1/4$  for  all  $x,y,a,b$.
The postquantumness of the above noisy PR boxes was studied in Ref. \cite{ABP+09}. 

The first family that I study is the genuine boxes in the noisy PR boxes~ (\ref{ABP+}) given by
\be \label{genisoPR}
P=\epsilon P_{PR} +  \nu P^{100}_{PR}  + (1-\epsilon-\nu)  P_N,
\ee
which are genuine if $\nu \le 1/2$~\cite{SBP09}. 
The above family has the CHSH value given by $\mathcal{B}_{000}= 4\epsilon $, which implies that it is Bell nonlocal if and only if $\epsilon>1/2$. The Bell nonlocal boxes given in Eq.~(\ref{genisoPR}) have postquantumness if and only if $\epsilon^2 + \nu^2 >1/2 $~\cite{SBP09}, otherwise, they have a quantum model. I wish to note that the information causality is violated by these postquantum boxes even if the postquantumness is below Tsirelson's bound of the CHSH inequality~\cite{ABP+09}. 

The second family that I study is the genuine boxes in the noisy PR boxes~(\ref{ABP+}) given by
\be \label{genisoPR2}
P=\epsilon P_{PR} +  \nu P^{111}_{PR}  + (1-\epsilon-\nu)  P_N.
\ee
which are genuine if $\nu \le 1/2$ since Bell nonlocality of these noisy PR boxes is witnessed by the single CHSH inequality for $\nu \le 1/2$, otherwise, the other CHSH inequality witnesses Bell nonlocality.
The above family has the CHSH value given by $\mathcal{B}_{000}= 4\epsilon$, which implies that it is Bell nonlocal if and only if $\epsilon>1/2$ as in the case of the other family in Eq.~(\ref{genisoPR}). However, in contrast to the other family, the Bell nonlocal boxes of the family in Eq.~(\ref{genisoPR2}) have postquantumness indicated by the information causality if and only if $\epsilon >1/\sqrt{2} $~\cite{ABP+09}, otherwise, they have a quantum model~\cite{Jeb14}, which is, however, not indicated by the Navascue-Pironio-Acin criterion~\cite{NPA07} as illustrated in~\cite{ABP+09}. I wish to note that the postquantumness of the family~(\ref{genisoPR2}) does not lie below Tsirelson's bound of the CHSH inequality since the information causality is violated if and only if Bell nonlocality is above Tsirelson's bound~\cite{ABP+09}.

The third family that I study is the genuine boxes in the noisy PR boxes~(\ref{ABP+}) given by
\be \label{genisoPR3}
P=\epsilon P_{PR} +  \nu P^{0000}_{D}  + (1-\epsilon-\nu)  P_N.
\ee
which are genuine for any $\nu > 0$ since these noisy PR boxes are Bell nonlocal if and only if the single CHSH inequality is violated for any $\nu>0$.
The above family has the CHSH value given by $\mathcal{B}_{000}= 4\epsilon+2\nu$, which implies that the range in which it is Bell nonlocal is different than in that of the other two families in Eqs.~(\ref{genisoPR}) and ~(\ref{genisoPR2}). The Bell nonlocal boxes given in Eq.~(\ref{genisoPR3}) violate the information causality if and only if $(\epsilon + \nu)^2+ \epsilon ^2 >1 $~\cite{ABP+09}. On the other hand, there are Bell nonlocal boxes in Eq.~(\ref{genisoPR3})  that lie below Tsirelson's bound of the CHSH inequality and
do not violate the information causality~\cite{ABP+09}. 

Now I obtain the following lemma.
\begin{lem}\label{lem3}
In the nonsignaling model of genuine boxes~\cite{SBP09}, the state space admits the PR box fraction of nonobjective information. On the other hand,  the information causality identifies the physical limitation of quantum Bell nonlocality by Tsirelson's bound of the CHSH inequality, and postquantumness specific to the polytope of genuine boxes, which lie below Tsirelson's bound of the CHSH inequality, is also indicated by the information causality.
\end{lem}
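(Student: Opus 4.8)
The plan is to split the lemma into its two assertions, drawing for each on the quantitative facts already recorded for the three families in Eqs.~(\ref{genisoPR})--(\ref{genisoPR3}). For the first assertion I would note that the genuine-box polytope contains the single PR box $P_{PR}$ as a vertex; applying the decomposition of Eq.~(\ref{PRdecom}) to $P_{PR}$ yields $p_{PR}=1$, so $F_{PR}(P_{PR})=1>0$, and more generally every Bell nonlocal box in the three families carries $F_{PR}=\epsilon>0$ since each has a local part with $F_{PR}=0$. By Proposition~1 any such box with $F_{PR}>0$ certifies nonobjective information, so the state space of genuine boxes admits the PR box fraction of nonobjective information.

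For the Tsirelson-bound part of the second assertion I would use the CHSH values already listed: $\mathcal{B}_{000}=4\epsilon$ for Eqs.~(\ref{genisoPR}) and (\ref{genisoPR2}), and $\mathcal{B}_{000}=4\epsilon+2\nu$ for Eq.~(\ref{genisoPR3}), placing Tsirelson's bound at $\epsilon=1/\sqrt{2}$ and $2\epsilon+\nu=\sqrt{2}$ respectively. Comparing with the information-causality thresholds recorded from Ref.~\cite{ABP+09}---in particular $\epsilon>1/\sqrt{2}$ for Eq.~(\ref{genisoPR2}), coinciding exactly with $\mathcal{B}_{000}>2\sqrt{2}$---I would verify that every box above Tsirelson's bound violates information causality, so that information causality witnesses the physical limitation set by Tsirelson's bound.

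The decisive step is the remaining claim, that postquantumness specific to the genuine-box polytope and lying below Tsirelson's bound is also witnessed by information causality. I would exhibit this through the first family, Eq.~(\ref{genisoPR}), whose postquantumness condition $\epsilon^2+\nu^2>1/2$ can be met with $\epsilon<1/\sqrt{2}$, hence $\mathcal{B}_{000}=4\epsilon<2\sqrt{2}$. Concretely, the noiseless two-PR-box mixture $\nu=1-\epsilon$ with $\epsilon\in(1/2,1/\sqrt{2})$ gives $\epsilon^2+\nu^2=2\epsilon^2-2\epsilon+1=\tfrac{1}{2}+\tfrac{1}{2}(2\epsilon-1)^2>\tfrac{1}{2}$, so it is postquantum while remaining genuine ($\nu<1/2$), Bell nonlocal ($\epsilon>1/2$), and below Tsirelson's bound. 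Since information causality is violated by exactly these boxes \cite{ABP+09}, such postquantumness below Tsirelson's bound is indicated by information causality.

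The main obstacle is arguing that this below-Tsirelson postquantumness is \emph{specific} to the genuine-box polytope---that it is tied to the nonsimpliciality of the local state spaces rather than to the CHSH value alone. I would address this by contrasting Eq.~(\ref{genisoPR}) with the simplicial theories of Lemmas~\ref{lempq} and \ref{lemHP}, where information causality detects no postquantumness below Tsirelson's bound: the postquantum region just exhibited is generated by mixing $P_{PR}$ with the second PR-box vertex $P^{100}_{PR}$, a direction available only because the genuine-box local state space is nonsimplicial, and it is precisely along this direction that information causality detects postquantumness invisible in the simplicial cases.
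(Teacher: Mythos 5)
Your proof of the first assertion misses what the lemma is actually asserting and what the subsequent theorem consumes. You establish $F_{PR}>0$ only for the PR box itself and for Bell nonlocal members of the three families; but that much is true in every theory in the paper --- in the theories of Lemmas \ref{lempq} and \ref{lemHP} the PR box fraction is nonzero precisely on the Bell nonlocal boxes --- so it cannot be the feature that the genuine-box polytope ``admits'' and the others do not. The intended content is that the genuine-box state space contains \emph{Bell-local} boxes with $F_{PR}>0$: the paper's proof takes the isotropic boxes $P=\epsilon P_{PR}+(1-\epsilon)P_N$ of Eq. (\ref{isoPR}), which violate the CHSH inequality only for $\epsilon>1/2$ yet have $F_{PR}=\epsilon>0$ for every $\epsilon>0$, so the range $0<\epsilon\le 1/2$ exhibits Bell-local boxes certifying nonobjective information. (A side error: $P^{100}_{PR}$ is not a vertex of the genuine-box polytope, which contains a single PR box and the $16$ deterministic boxes; only the Bell-local mixtures $\nu P^{100}_{PR}+(1-\nu)P_N$ with $\nu\le 1/2$ belong to it.)

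The second gap is in your ``specificity'' step. Your criterion --- that the postquantum region of family (\ref{genisoPR}) is generated along a direction available only because the local state spaces are nonsimplicial --- cannot do the required work, because the case that must be excluded, family (\ref{genisoPR3}), lives in the very same nonsimplicial polytope, and it is exactly there that the paper records Bell nonlocal boxes below Tsirelson's bound that do \emph{not} violate information causality. Unless family (\ref{genisoPR3}) is classified as \emph{not} specific to the genuine-box polytope, the lemma's last claim is false on your reading. The paper supplies the missing classification by computing $F_{PR}$ of the noise components in Eq. (\ref{PgB}): for families (\ref{genisoPR}) and (\ref{genisoPR2}) the noise $\nu Q+(1-\nu)P_N$ with $Q$ a PR box is a Bell-local box with $F_{PR}=\nu>0$, i.e. it carries precisely the feature established in the first part, whereas for family (\ref{genisoPR3}) the noise $\nu P^{0000}_D+(1-\nu)P_N$ has $F_{PR}=0$, so those boxes belong to a state space of the type of Eq. (\ref{GNSTpq}), in which all Bell-local boxes have $F_{PR}=0$. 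Only with this classification does ``postquantumness specific to genuine boxes below Tsirelson's bound is indicated by information causality'' follow. Your exhibition of family (\ref{genisoPR})'s below-Tsirelson postquantum region ($\epsilon^2+\nu^2>1/2$ with $\epsilon<1/\sqrt{2}$, IC violated per \cite{ABP+09}) is correct and agrees with the paper, and your Tsirelson-bound check is fine (the paper simply cites \cite{PPK+09}); but note also that calling both theories of Lemmas \ref{lempq} and \ref{lemHP} ``simplicial'' contradicts the paper, which states that the local state spaces of Eq. (\ref{GNSTpq1}) have no simpliciality --- the paper's contrast is drawn through $F_{PR}$ of Bell-local boxes, not through simpliciality.
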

\begin{proof}
To demonstrate that the nonsignaling model of genuine boxes has Bell-local boxes with the PR box fraction of nonobjective information, consider the genuine boxes in the noisy PR boxes~(\ref{ABP+}) given by
\be \label{isoPR}
P=\epsilon P_{PR} + (1-\epsilon)  P_N,
\ee
where $\epsilon$ satisfies $0\le \epsilon \le 1$.  For noisy PR boxes (\ref{isoPR}), the CHSH inequality is violated if and only if $\epsilon > 1/2$, on the other hand, the  PR box fraction of nonobjective information is given by $\Gamma=4\epsilon>0$ for any $\epsilon >0$. Thus, the state space of genuine boxes admits the PR box fraction of nonobjective information in Bell-local boxes.

Any genuine box that violates Tsirelson's bound of the CHSH inequality also violates the information causality~\cite{PPK+09}. On the other hand, Bell nonlocal boxes whose postquantumness lies below Tsirelson's bound and are specific to the state space of genuine boxes are also witnessed by the information causality. To show this, consider the noises acting on the noisy PR boxes in Eqs.~(\ref{genisoPR}),~(\ref{genisoPR2}) and~(\ref{genisoPR3}), 
\be  \label{PgB}
P=\nu Q + (1-\nu) P_N,
\ee
where $Q=P^{100}_{PR},P^{111}_{PR}$ and $P^{0000}_{D}$, respectively. I note
that in the case of the two families in Eqs.~(\ref{genisoPR}) and (\ref{genisoPR2}), the noises in Eq.~(\ref{PgB}) have the PR box fraction of nonobjective information indicated by $\Gamma=4\nu>0$ for any $\nu>0$, on the other hand, in the case of the family in Eq.~(\ref{genisoPR3}), the noise has $\Gamma=0$. This implies that 
for any $\nu >0$, the noisy PR boxes of the two families in Eqs.~ (\ref{genisoPR}) and~(\ref{genisoPR2}) have the specific feature of genuine boxes, on the other hand, the noisy PR boxes of the third family in Eq.~ (\ref{genisoPR3})  belong to the state space for which $\Gamma=0$. From this, I conclude that the postquantumness that lies below Tsirelson's bound and is specific to the genuine boxes is indicated by the information causality.   
\end{proof}

I now proceed to prove the following theorem.
\begin{thm}
In the context of the nonsignaling models for which all Bell-local boxes have $\Gamma=0$, such as given by Eqs.~(\ref{GNSTpq}) and~(\ref{HP}), the emergence of the PR box fraction of nonobjective information in Bell-local boxes of the polytope of genuine boxes~\cite{SBP09} isolates the postquantumness indicated by the information causality.
\end{thm}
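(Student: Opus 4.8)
The plan is to assemble the theorem as a synthesis of Lemmas \ref{lempq}, \ref{lemHP}, and \ref{lem3}, organized around a single contrast: whether the PR box fraction of nonobjective information can emerge already in the Bell-local sector of the theory. First I would recall, from Lemmas \ref{lempq} and \ref{lemHP}, that in the theories of Eqs. (\ref{GNSTpq}) and (\ref{HP}) every Bell-local box has $F_{PR}=0$; indeed, for Eq. (\ref{GNSTpq}) one has $F_{PR}(P)=c_0$ with Bell nonlocality equivalent to $c_0>0$, while for Eq. (\ref{HP}) one has $F_{PR}(P_H)=h_{PR}$ and the Bell-local boxes are exactly those with $h_{PR}=0$. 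Thus in both theories the certification of nonobjective information coincides with Bell nonlocality, and no nonobjective information emerges in the Bell-local sector. I would then record, again from these two lemmas, that each theory contains postquantum models lying below Tsirelson's bound that the information causality fails to witness (the noisy PR boxes below Tsirelson's bound in Eq. (\ref{GNSTpq}), and the Hardy correlations between the Tsirelson bound on $p_H$ and the information-causality bound of Ref. \cite{AKR+10}).

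The second step is to contrast this with the genuine-boxes polytope. By Lemma \ref{lem3} and the computation $F_{PR}=\epsilon$ for the isotropic noisy PR boxes of Eq. (\ref{isoPR}), the Bell-local members of this polytope ($\epsilon\le 1/2$) already carry $F_{PR}=\epsilon>0$, so here the PR box fraction of nonobjective information genuinely emerges in the Bell-local sector, which is the structural feature absent from the $F_{PR}=0$ theories above. I would then invoke the effectiveness half of Lemma \ref{lem3}: every genuine box above Tsirelson's bound violates the information causality, and the postquantumness that lies below Tsirelson's bound and is specific to the genuine-boxes polytope is also indicated by the information causality.

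The crucial, and I expect most delicate, step is to make ``isolates'' precise by tracking, inside the genuine-boxes polytope, the correlation between the emergence of $F_{PR}>0$ in the noise and the reach of the information causality. Using the three families of Eqs. (\ref{genisoPR}), (\ref{genisoPR2}), (\ref{genisoPR3}) with their noise boxes of Eq. (\ref{PgB}), I would record that the noises of the first two families carry $F_{PR}=\nu>0$ whereas the third carries $F_{PR}=0$, and that, correspondingly, the postquantumness of the first two families is witnessed by the information causality---for Eq. (\ref{genisoPR}) even below Tsirelson's bound---while Eq. (\ref{genisoPR3}) admits below-Tsirelson postquantum boxes that the information causality does not witness \cite{ABP+09}. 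Thus the third family, though embedded in the genuine-boxes polytope, behaves exactly like the $F_{PR}=0$ theories of Eqs. (\ref{GNSTpq}) and (\ref{HP}), and only the families carrying $F_{PR}>0$ in their noise realize the genuine feature. This pins down that the postquantumness indicated by the information causality is singled out precisely by the emergence of $F_{PR}>0$ of nonobjective information in the Bell-local sector.

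I would close by combining the two directions: the theories of Eqs. (\ref{GNSTpq}) and (\ref{HP}), lacking Bell-local boxes with $F_{PR}>0$, contain information-causality-invisible postquantumness, whereas the genuine-boxes polytope, in which such nonobjective information does emerge, is exactly where the information causality captures the polytope-specific postquantumness below Tsirelson's bound; the emergence of the PR box fraction of nonobjective information in Bell-local boxes therefore isolates the postquantumness indicated by the information causality, over the other generalized nonsignaling theories. The main obstacle is this last equivalence between ``$F_{PR}>0$ in the Bell-local/noise part'' and ``the information causality reaches the below-Tsirelson postquantumness'': it rests on matching the $F_{PR}$ values computed here against the information-causality conditions of Ref. \cite{ABP+09} family by family, and care is needed that the families of Eqs. (\ref{genisoPR})--(\ref{genisoPR3}) be treated as representative of, rather than merely examples within, the polytope.
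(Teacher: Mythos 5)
Your proposal is correct and follows essentially the same route as the paper's own proof: both synthesize Lemmas \ref{lempq}, \ref{lemHP}, and \ref{lem3}, attributing the below-Tsirelson postquantumness not indicated by the information causality to boxes or noises belonging to state spaces whose Bell-local sector has $F_{PR}=0$, and contrasting this with the emergence of $F_{PR}>0$ in Bell-local boxes of the genuine-boxes polytope. Your family-by-family tracking of $F_{PR}$ in the noise terms of Eq.~(\ref{PgB}) is exactly the content the paper places inside the proof of Lemma \ref{lem3}; the only presentational difference is that the paper opens by noting that the genuine-boxes polytope contains the state spaces of Eqs.~(\ref{GNSTpq}) and (\ref{HP}), whereas you frame the argument as a direct contrast between the theories.
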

\begin{proof}
The nonsignaling model of genuine boxes also contains all nonsignaling boxes in the other two nonsignaling models of Eqs.~(\ref{GNSTpq}) and~(\ref{HP}). In this context, I note that the postquantumness below Tsirelson's bound of the CHSH inequality, not indicated by the information causality, is due to the boxes or the noises that belong to a state space for which Bell nonlocality is equivalent to the PR box fraction of nonobjective information, as noted in Lemmas~\ref{lempq},~\ref{lemHP}, and~\ref{lem3}. 
Thus, in the context of the nonsignaling models for which all Bell-local boxes have $\Gamma=0$, the emergence of the PR box fraction of nonobjective information in Bell-local boxes in the nonsignaling model of genuine boxes isolates the postquantumness indicated by the information causality. 
\end{proof}

The main result is obtained in the following corollary of the above theorem.
\begin{cor}
Quantum theory is distinguished by the limitation of quantum Bell nonlocality in the nonsignaling model of genuine boxes, indicated by the information causality, and the emergence of certification of nonobjective information by the PR box fraction in the Bell-local boxes over the state spaces that do not have the PR box fraction of nonobjective information in Bell-local boxes.
\end{cor}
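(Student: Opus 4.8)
The plan is to assemble the Theorem together with Lemmas \ref{lempq}, \ref{lemHP}, and \ref{lem3}, since the corollary is precisely the interpretive statement that these results jointly single out quantum theory from the generalized nonsignaling theories considered. First I would recall that information causality alone supplies only a partial criterion: by Lemmas \ref{lempq} and \ref{lemHP}, the generalized nonsignaling theories of Eqs. (\ref{GNSTpq}) and (\ref{HP}) contain postquantum boxes lying below Tsirelson's bound of the CHSH inequality that information causality fails to witness. The common structural feature of these two theories, established in the proofs of those lemmas, is that every Bell-local box has $F_{PR}=0$, so that within them certification of nonobjective information is equivalent to Bell nonlocality itself and therefore cannot act as an independent witness beyond the CHSH value.

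Next I would invoke Lemma \ref{lem3}, which records the contrasting behavior of the genuine-boxes theory: there information causality not only reproduces Tsirelson's bound as the physical limit of quantum Bell nonlocality, but additionally witnesses the postquantumness that lies below Tsirelson's bound and is specific to the genuine-boxes polytope. The decisive difference, made explicit in the proof of Lemma \ref{lem3} through the family (\ref{isoPR}) and the analysis of the noises (\ref{PgB}), is that the genuine-boxes theory admits Bell-local boxes with $F_{PR}>0$, i.e.\ the emergence of the PR box fraction of nonobjective information in Bell-local boxes, a feature absent from the theories of Eqs. (\ref{GNSTpq}) and (\ref{HP}).

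I would then route these observations through the Theorem, which asserts that this emergence of $F_{PR}>0$ in Bell-local boxes isolates exactly the postquantumness indicated by information causality over the state spaces whose Bell-local boxes all have $F_{PR}=0$. Combining the two criteria --- the Tsirelson limitation on Bell nonlocality indicated by information causality, and the certification of nonobjective information by a nonzero PR box fraction in Bell-local boxes --- then distinguishes quantum theory from those generalized nonsignaling theories in which certification of nonobjective information coincides with Bell nonlocality. Since information causality by itself leaves the sub-Tsirelson postquantum boxes of Eqs. (\ref{GNSTpq}) and (\ref{HP}) undetected, the conjunction of the two criteria goes strictly beyond information causality alone, which yields the corollary.

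The main obstacle I anticipate is conceptual rather than computational: one must argue that the certification of nonobjective information supplies precisely what information causality lacks, namely that the nonsimpliciality of the local state spaces --- equivalently the presence of $F_{PR}>0$ in Bell-local boxes --- is the structural ingredient that lets information causality detect the sub-Tsirelson postquantumness in the genuine-boxes theory. Establishing that this correspondence is exactly what separates the genuine-boxes theory from those of Eqs. (\ref{GNSTpq}) and (\ref{HP}) is the delicate step, and it rests entirely on the equivalences between Bell nonlocality and $F_{PR}>0$ recorded in Lemmas \ref{lempq} and \ref{lemHP} versus the strict richness of the genuine-boxes local state spaces used in Lemma \ref{lem3}.
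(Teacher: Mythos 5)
Your proposal is correct and follows essentially the same route as the paper: the paper states this corollary without a separate proof, presenting it as an immediate consequence of the Theorem, whose own proof combines Lemmas \ref{lempq}, \ref{lemHP}, and \ref{lem3} in exactly the way you reconstruct (sub-Tsirelson postquantumness unwitnessed by information causality occurs precisely in state spaces where Bell nonlocality is equivalent to $F_{PR}>0$, while the genuine-boxes polytope, having Bell-local boxes with $F_{PR}>0$, has its specific sub-Tsirelson postquantumness witnessed by information causality). Your closing caveat about the conceptual delicacy of the final interpretive step is apt, since the paper handles that step at the same declarative level rather than with additional formal argument.
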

However, the above result only provides a partial answer to the question of what distinguishes quantum theory from generalized nonsignaling theories, as it is not without loss of full generality. This is because there are other nonsignaling models whose state space does not have the PR box fraction of nonobjective information in Bell-local boxes~\cite{ABP+09, AKR+10, Jeb14, BMR+19} that are not considered in the present work.   

\section{Conclusions} In summary, using an information-theoretic concept of certifying nonobjective information by the PR box fraction, I demonstrated that identifying the information-theoretic limitations of quantum Bell nonlocality alone is not sufficient to distinguish quantum theory from generalized nonsignaling theories. This demonstration follows from the studies in the present work that the limitations of quantum Bell nonlocality alone do not single out the full nonclassical state space of quantum theory. 
To obtain this conclusion,  I studied two nonsignaling models for which the PR box fraction of nonobjective information is equivalent to Bell nonlocality and a third nonsignaling model given by genuine boxes studied in Ref. \cite{SBP09}, which has the PR box fraction of nonobjective information in Bell-local boxes. I then demonstrated that in the case of genuine boxes, the emergence of the PR box fraction of nonobjective information in Bell-local boxes over the other nonsignaling models for which all Bell-local boxes do not have the PR fraction of nonobjective information has the following implication. It serves to isolate the postquantumness specific to the state space of genuine boxes by the information causality. This led to providing a partial answer to the question of what distinguishes quantum theory from generalized nonsignaling theories as follows. Quantum theory is distinguished by the limitation of quantum Bell nonlocality identified by the information causality in the state space of genuine boxes, together with the emergence of the PR box fraction of nonobjective information in Bell-local boxes over the other nonsignaling models that do not have the PR box fraction of nonobjective information in Bell-local boxes. However, astonishingly, this partial answer goes beyond the partial answer given by the information causality alone, as in~\cite{ABP+09}, or other partial answers that appeared before in~\cite{SBP09,BS09} by the nonlocality swapping and distillation protocols, respectively.  
I hope to answer the question completely in an upcoming complementary paper by using a three-way decomposition of the nonsignaling boxes in~\cite{Jeb14}, which I may present from the perspective of selftesting of quantum theory as explored in~\cite{WC20}. 

\section*{Acknowledgement} This work was supported by the National Science and Technology Council, the Ministry of Education (Higher Education Sprout Project NTU-113L104022-1), and the National Center for Theoretical Sciences of Taiwan.

\appendix 
\section{The nonlinear measure of correlations} \label{FPR}
Here, I define the nonlinear measure of correlations, denoted by $\Gamma$, consider in~\cite{Jeb25} to define the PR box fraction of dimensionally restricted nonlocality. This measure was constructed in terms of the  CHSH inequalities in the covariance form~\cite{PHC+17}.
Define  the  absolute covariance CHSH functions  $\texttt{cov}\mathcal{B}_{2\alpha+\beta}  :=
| \texttt{cov}(A_0B_0)        +         (-1)^{\beta}\texttt{cov}(A_0B_1)        +
(-1)^{\alpha} \texttt{cov}(A_1B_0)  + (-1)^{\alpha  \oplus  \beta \oplus  1}
\texttt{cov}(A_1B_1)|$. Consider the following triad of quantities constructed from these four covariance CHSH functions:
\begin{align}
\begin{split}
\Gamma_1&:= \Big||\texttt{cov}\mathcal{B}_0  - \texttt{cov}\mathcal{B}_1  | -
|\texttt{cov}\mathcal{B}_2  - \texttt{cov}\mathcal{B}_3|\Big|\\
\Gamma_2&:= \Big||\texttt{cov}\mathcal{B}_0  -\texttt{cov}\mathcal{B}_2  | -
|\texttt{cov}\mathcal{B}_1  - \texttt{cov}\mathcal{B}_3|\Big| \\
\Gamma_3&:= \Big||\texttt{cov}\mathcal{B}_0  -\texttt{cov}\mathcal{B}_3  | -
|\texttt{cov}\mathcal{B}_1  - \texttt{cov}\mathcal{B}_2|\Big|.\label{gi}
\end{split}
\end{align}
To capture the PR box fraction with dimensionally restricted nonlocality, the following quantity is defined:
\begin{equation}
\Gamma:= \min_i \Gamma_i. 
\label{eq:G}
\end{equation}
Here $\Gamma$ satisfies the following properties: (i) $ 0 \le \Gamma \le 4$; (ii) $\Gamma = 0$ for any product box of the form, $P(ab|A_xB_y)=P(a|A_x)P(b|B_y)$; (iii) $\Gamma$ is invariant under relabeling of inputs and/or outputs, and (iv)  $\Gamma= 4$ for any PR box $P^{\alpha\beta\gamma}_{PR}$.

\end{document}